\newtheorem{theorem}{Theorem}[section]
\newtheorem{thm}{Theorem}[section]
\newtheorem{cor}[thm]{Corollary}
\newtheorem{prop}[thm]{Proposition}
\newtheorem{rem}[thm]{Remark}
\newtheorem{defn}[thm]{Definition}
\def\+{\oplus}
\def\F{{\mathbb F}}
\def\Z{{\mathbb Z}}
\def\F{{\mathbb F}}
\def\Z{{\mathbb Z}}
\def\00{{\bf 0}}
\def\11{{\bf 1}}
\def\+{\oplus}
\def\\{\cr}
\def\({\left(}
\def\){\right)}
\newcommand{\cardinality}[1]{\# #1}
\providecommand{\newoperator}[3]{%
  \newcommand*{#1}{\mathop{#2}#3}}
\newoperator{\FD}{\mathrm{FD}}{\nolimits}
\begin{document}
\title{\bf Higher Order $c$-Differentials}
\author{Aaron Geary$^1$, Marco Calderini$^2$, Constanza Riera$^3$, \and Pantelimon~St\u anic\u a$^1$ 
 \vspace{0.15cm} \\
 \small  $^1$ Applied Mathematics Department, 
 Naval Postgraduate School, \\
 \small Monterey, USA;  {\tt \{aaron.geary, pstanica\}@nps.edu}\\
\small $^2$ Department of Informatics, University of Bergen\\
\small Postboks 7803, N-5020, Bergen, Norway;
\small {\tt Marco.Calderini@uib.no}\\
\small$^3$Department of Computer Science,\\
\small Electrical Engineering and Mathematical Sciences,\\
\small   Western Norway University of Applied Sciences,\\
\small  5020 Bergen, Norway; {\tt csr@hvl.no}
  }

\date{September 6, 2021}
\maketitle

\begin{abstract}
In~\cite{EFRST20}, the notion of $c$-differentials was introduced as a potential expansion of differential cryptanalysis against block ciphers utilizing substitution boxes.     
Drawing inspiration from the technique of higher order differential cryptanalysis, in this paper we propose the notion of higher order $c$-derivatives and differentials and investigate their properties.   Additionally, we consider how several classes of functions, namely the multiplicative  inverse function and the Gold function, perform under higher order $c$-differential uniformity.
\end{abstract}
{\bf Keywords:} 
Boolean and $p$-ary function, 
higher order differential,
differential uniformity,
differential cryptanalysis
\newline


\section{Introduction and background}
\label{sec1}

The newly proposed $c$-differentials \cite{EFRST20} modify the traditional differential cryptanalysis technique by applying a multiple ``$c$" to one of the outputs of an S-box primitive $F$.  If an input pair $(x,x+a)$ with difference ``$a$" results in an output pair $(F(x),F(x+a))$ with difference $b=F(x+a)-F(x)$, then the couple $(a,b)$ is the traditional {\em differential} traced throughout a cipher.  A differential that appears with a high probability is used as the basis of a classical differential attack~\cite{BS91}.   The new $c$-differential uses a modified output pair of $(cF(x),F(x+a))$, and the new output difference is then $b=F(x+a)-cF(x)$.  Similar to other extensions and modifications of differential cryptanalysis, $c$-differentials have been shown to result in higher probabilities than traditional differentials for some functions \cite{EFRST20}, \cite{MRSZY20}, thus potentially resulting in attacks against ciphers that are resistant against other forms of differential cryptanalysis.

The introduction of $c$-differentials and the corresponding $c$-differential uniformity ($c$DU) has been met with substantial interest.  Researchers have since submitted multiple papers (see~\cite{BC20,MRSZY20,SGGRT20,WLZ20,ZH20}, just to cite only a few of these works) further exploring the topic.  These include investigations of the $c$DU of various classes of functions, finding functions with low $c$DU, construction and existence results on the so-called perfect $c$-nonlinear and almost perfect $c$-nonlinear functions, and generalizations of cryptographic properties to include the new $c$-differential.  

In this paper, we continue this investigation by considering the extension of $c$-differentials into higher order.  This is motivated by the extension of the original differential cryptanalysis technique into higher order differential cryptanalysis (\cite{La94}, \cite{Kn94}).  In contrast with the traditional higher order derivatives of Boolean or $p$-ary functions, the $c$-derivative and higher order $c$-derivative do not always reduce the degree of a function.  However, in the same spirit as traditional higher order differentials, higher order $c$-differentials have the potential to allow for a better trace of multiple differences through an encryption scheme, and any resistance against such higher order differentials with large probabilities furthers the case of a cipher's security.  

The rest of the paper is organized as follows. In Section \ref{secprelim} we provide the necessary notation and definitions to introduce the higher order $c$-derivative and investigate its properties in Section \ref{secHO}.  In Sections \ref{secinv} and \ref{gold} we consider specific higher order $c$-differential cases of the inverse function and Gold function over finite fields.   Section \ref{secsum} summarizes our findings.

\section{Preliminaries}
\label{secprelim}
We introduce here some basic notations and definitions on Boolean and $p$-ary functions (where $p$ is an odd prime); the reader can consult~\cite{Bud14,CH1,CS17,MesnagerBook,Tok15} for more on these objects. For a positive integer $n$ and $p$ a prime number, we denote by $\F_p^n$ the $n$-dimensional vector space over $\F_p$, and by $\F_{p^n}$ the  finite field with $p^n$ elements, while $\F_{p^n}^*=\F_{p^n}\setminus\{0\}$ will denote the multiplicative group.  We call a function from $\F_{p^n}$ (or $\F_p^n$) to $\F_p$  a {\em $p$-ary  function} on $n$ variables. For positive integers $n$ and $m$, any map $F:\F_{p^n}\to\F_{p^m}$ (or, $\F_p^n\to\F_p^m$)  is called a {\em vectorial $p$-ary  function}, or {\em $(n,m,p)$-function}. If $p=2$ the function is called a vectorial Boolean function.  In any characteristic, when $m=n$ the function $F$ can be uniquely represented as a univariate polynomial over $\F_{p^n}$ (using some identification, via a basis, of the finite field with the vector space) of the form
$
F(x)=\sum_{i=0}^{p^n-1} a_i x^i,\ a_i\in\F_{p^n},
$
whose {\em algebraic degree}, denoted by $\deg(F)$, is then the largest weight  in the $p$-ary expansion of $i$ (that is, the sum of the digits of the exponents $i$ with $a_i\neq 0$). To (somewhat) distinguish between the vectorial and single-component output, we shall use upper/lower case to denote the functions.
 
Given a $(n,m,p)$-function $F$, the derivative of $F$ with respect to~$a \in \F_{p^n}$ is the $(n,m,p)$-function
\[ D_{a}F(x) =  F(x + a)- F(x), \mbox{ for  all }  x \in \F_{p^n}.\]
The distribution of the derivatives of an $(n,m,p)$-function used in an S-box is important.  If we let  $\Delta_F(a,b)=\cardinality{\{x\in\F_{p^n} : F(x+a)-F(x)=b\}}$, then we call the quantity
$\delta_F=\max\{\Delta_F(a,b)\,:\, a,b\in \F_{p^n}, a\neq 0 \}$ the {\em differential uniformity} of $F$.

The $i$-th derivative of $F$ at $(a_1,a_2,\ldots,a_i)$ is defined recursively as  
\[ D_{a_1,\ldots,a_i}^{(i)}F(x) = D_{a_i}(D_{a_1,\ldots,a_{i-1}}^{(i-1)}F(x)).
\]

The new $c$-differential, which applies a multiplier to one of the outputs, immediately leads to a modified derivative.  For an $(n,m,p)$-function $F$, and $a\in\F_{p^n},b\in\F_{p^m}$, and $c\in\F_{p^m}$, the ({\em multiplicative}) {\em $c$-derivative} of $F$ with respect to~$a \in \F_{p^n}$ is the  function
\[
 {_c}D_{a}F(x) =  F(x + a)- cF(x), \mbox{ for  all }  x \in \F_{p^n}.
\]

Equipped with this new $c$-derivative, a new $c$-autocorrelation function was defined in \cite{SGGRT20}, and several cryptographic properties of $(n,m,p)$-functions were generalized.  That work continues in this paper as we extend the $c$-derivative into higher order, investigate its properties, and then analyze the higher order $c$-differential uniformity of several functions.

\section{Higher order $c$-differentials}
\label{secHO}

Inspired by the concept of higher order derivatives of functions between Abelian groups and their applications to cryptography in \cite{La94}, we propose the following definition.

\begin{defn}  Let $F:\F_{p^n}\rightarrow\F_{p^m}$ be an $(n,m,p)$-function.  The $i$-th $c$-derivative of $F$ at $(a_1,a_2,\ldots,a_i)$ is 
\[
 {_c}D_{a_1,\ldots,a_i}^{(i)}F(x) =  \;{_c}D_{a_i}({_c}D_{a_1,\ldots,a_{i-1}}^{(i-1)}F(x)),
\]
where ${_c}D_{a_1,\ldots,a_{i-1}}^{(i-1)}F(x)$ is the $(i-1)$-th derivative of $F$ at $(a_1,a_2,\ldots,a_{i-1})$.
\end{defn}
This implies the $0$-th $c$-derivative is the function $F$ itself and the 1st $c$-derivative is the $c$-derivative defined in Section \ref{secprelim}.  Notice that, when $c=1$, we recover the traditional $(n,m,p)$-function higher order derivative.  

Before we explore these new higher order derivatives we need to ensure several basic properties carry over from the traditional (i.e. $c=1$) case.  First, we see that the sum rule holds.  That is, that the $c$-derivative of a sum is a sum of the $c$-derivatives. 
\begin{equation} \label{eq2}
\begin{split}
{_c}D_a(F+G)(x) & =F(x+a)+G(x+a)-c(F(x)+G(x))\nonumber\\
& =F(x+a)-cF(x) + G(x+a)-cG(x)\nonumber\\
& =\;{_c}D_aF(x)+{_c}D_aG(x).
\end{split}
\end{equation}

A product rule exists for the traditional derivative, $D_a(FG)(x)=F(x+a)D_aG(x)+D_aF(x)G(x)$. We find something similar with the $c$-derivative,
\begin{equation} \label{eq3}
\begin{split}
{_c}D_a(FG)(x) & =F(x+a)G(x+a)-cF(x)G(x)\nonumber\\
& =F(x+a)\left(G(x+a)-cG(x)\right)+\left((F(x+a)-F(x)\right)cG(x)\nonumber\\
& =F(x+a)\;{_c}D_aG(x)+\;{_c}D_aF(x)\;c\,G(x).
\end{split}
\end{equation}

Now we consider the higher order $c$-derivatives. When $i=2$ we have 
\allowdisplaybreaks
\begin{align*}
 {_c}D_{a_1,a_2}^{(2)}F(x) =&  {_c}D_{a_2}({_c}D_{a_1}F(x))\\
 =& {_c}D_{a_2}(F(x + a_1)- cF(x))\\
 =& F(x+a_1+a_2)-cF(x+a_2) - c(F(x+a_1)-cF(x))\\
 =& F(x+a_1+a_2)-cF(x+a_2)- cF(x+a_1)+c^2F(x).
 \end{align*}

Taking another iteration,  we have 
\allowdisplaybreaks
\begin{align*}
 {_c}D_{a_1,a_2,a_3}^{(3)}F(x) & = F(x+a_1+a_2+a_3)\\
& -c\left[F(x+a_1+a_2)+F(x+a_1+a_3)+F(x+a_2+a_3)\right]\\
& + c^2\left[F(x+a_1)+ F(x+a_2)+F(x+a_3)\right] - c^3F(x).
\end{align*}

We see a similar pattern to Proposition 1 in \cite{La94}, albeit with the additional complication of powers of $c$, and we find the following identity:
\allowdisplaybreaks
\begin{align*}
F(x+a_1+a_2+a_3) & = \;{_c}D_{a_1,a_2,a_3}^{(3)}F(x)\\
& + c\left[{_c}D_{a_1,a_2}^{(2)}F(x)+{_c}D_{a_1,a_3}^{(2)}F(x)+{_c}D_{a_2,a_3}^{(2)}F(x)\right]\\
& + c^2\left[{_c}D_{a_1}(F(x))+{_c}D_{a_2}(F(x))+{_c}D_{a_3}(F(x))\right]\\
& + c^3F(x).
\end{align*}

The pattern holds in general, as we now show.

\begin{thm}\label{higher}
Let $F$ be an $(n,m,p)$-function with ${_c}D_{a_1,\ldots,a_i}^{(i)}F(x)$ the $i$-th $c$-derivative of $F$ at $(a_1,a_2,\ldots,a_i)$.  Then 
%
\begin{equation}\label{eq1}
F\left(x+\sum_{i=1}^n a_i\right) = \sum_{i=0}^{n} \sum_{1\leq j_1 < \ldots < j_i \leq n } c^{n-i} \;{_c}D_{a_{j_1},\ldots,a_{j_i}}^{(i)}F(x) .
\end{equation}
\end{thm}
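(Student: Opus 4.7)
The plan is to prove the identity by induction on $n$. For the base case $n=1$, the right-hand side equals $c\,F(x) + {_c}D_{a_1}F(x) = cF(x) + F(x+a_1) - cF(x) = F(x+a_1)$, which matches. The explicit computations for $n=2,3$ already displayed in the paper give confidence that the inductive pattern works, since the $c$-derivative is linear (by the sum rule~(\ref{eq2})) and the definition is purely recursive.

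For the inductive step, assume the formula holds for $n$. I would write
\[
F\Bigl(x+\sum_{i=1}^{n+1} a_i\Bigr) = F\Bigl((x+a_{n+1})+\sum_{i=1}^{n} a_i\Bigr)
\]
and apply the induction hypothesis with $x$ replaced by $x + a_{n+1}$, producing a sum of shifted derivatives ${_c}D^{(i)}_{a_{j_1},\ldots,a_{j_i}}F(x+a_{n+1})$ over subsets $\{j_1<\cdots<j_i\}\subseteq\{1,\ldots,n\}$ weighted by $c^{n-i}$. The crucial step is then to remove the shift by $a_{n+1}$: directly from the recursive definition,
\[
{_c}D^{(i)}_{a_{j_1},\ldots,a_{j_i}}F(x+a_{n+1}) = {_c}D^{(i+1)}_{a_{j_1},\ldots,a_{j_i},a_{n+1}}F(x) + c\,{_c}D^{(i)}_{a_{j_1},\ldots,a_{j_i}}F(x).
\]
Substituting this identity splits the expression into two sums: one of $(i{+}1)$-th derivatives whose index set contains $n+1$, and one of $i$-th derivatives indexed inside $\{1,\ldots,n\}$, the latter with an extra factor of $c$ (so the $c$-power becomes $c^{n+1-i}$). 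After reindexing the first sum ($i\mapsto i-1$), these are exactly the two contributions obtained when one splits the target RHS for $n+1$ according to whether $n+1$ appears among $\{j_1,\ldots,j_i\}$, with matching $c$-powers in each case.

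The only real obstacle is the bookkeeping needed to align the two splittings of subsets of $\{1,\ldots,n+1\}$ with the shifted/unshifted terms. As a cleaner alternative, one could first prove by induction on $i$ the closed form
\[
{_c}D^{(i)}_{a_1,\ldots,a_i}F(x) = \sum_{S\subseteq\{1,\ldots,i\}} (-c)^{i-|S|} F\Bigl(x+\sum_{j\in S}a_j\Bigr),
\]
substitute it into the right-hand side of~(\ref{eq1}), swap the order of summation so the outer index is $S\subseteq\{1,\ldots,n\}$, and collapse the resulting inner sum using the identity $\sum_{k=0}^{m}\binom{m}{k}(-1)^k=0$ for $m>0$. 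Only the term $S=\{1,\ldots,n\}$ survives, giving $F(x+\sum_{i=1}^n a_i)$ as required.
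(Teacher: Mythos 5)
Your proposal is correct, and in fact you offer two viable routes. Your primary route is the same induction on $n$ that the paper uses, just organized in the opposite direction: the paper isolates ${_c}D^{(n)}_{a_1,\ldots,a_n}F(x)$, applies ${_c}D_{a_n}$ to the rearranged induction hypothesis, and argues that the resulting double sum consists of exactly the derivatives whose index set contains $a_n$; you instead evaluate the induction hypothesis at the shifted point $x+a_{n+1}$ and remove the shift via the identity $G(x+a_{n+1})={_c}D_{a_{n+1}}G(x)+cG(x)$ applied to $G={_c}D^{(i)}_{a_{j_1},\ldots,a_{j_i}}F$. These are equivalent, but your bookkeeping is cleaner: the split of subsets of $\{1,\ldots,n+1\}$ according to whether they contain $n+1$ aligns transparently with the two terms of the shift identity, and the appended index $a_{n+1}$ is automatically the largest, so no appeal to permutation-invariance is needed. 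Your alternative route is genuinely different from the paper's proof of this theorem: it first establishes the closed form ${_c}D^{(i)}_{a_1,\ldots,a_i}F(x)=\sum_{S\subseteq\{1,\ldots,i\}}(-c)^{i-|S|}F\bigl(x+\sum_{j\in S}a_j\bigr)$ and then collapses the double sum by the binomial cancellation $\sum_k\binom{m}{k}(-1)^k=0$ for $m>0$ (the coefficient of $F\bigl(x+\sum_{j\in S}a_j\bigr)$ being $c^{n-|S|}\sum_{k=0}^{n-|S|}\binom{n-|S|}{k}(-1)^k$, which vanishes unless $S=\{1,\ldots,n\}$). Since the paper proves precisely that closed form as a separate proposition immediately afterwards, your alternative would let the theorem be derived from it in a few lines, at the cost of front-loading the (easy) induction establishing the closed form; either way the content is the same and both arguments are sound.
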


\begin{proof}
Equation \eqref{eq1} can also be written as 
\[
F\left(x+\sum_{i=1}^n a_i\right) = \;{_c}D_{a_1,\ldots,a_n}^{(n)}F(x) + \sum_{i=0}^{n-1} \sum_{1\leq j_1 < \ldots < j_i \leq n-1 } c^{n-1-i} \;{_c}D_{a_{j_1},\ldots,a_{j_i}}^{(i)}F(x),
\]
which implies
\[
{_c}D_{a_1,\ldots,a_n}^{(n)}F(x) = F\left(x+\sum_{i=1}^n a_i\right) - \sum_{i=0}^{n-1} \sum_{1\leq j_1 < \ldots < j_i \leq n-1 } c^{n-1-i} \;{_c}D_{a_{j_1},\ldots,a_{j_i}}^{(i)}F(x).
\]
We proceed by induction.  For $n=1$ we see \eqref{eq1} follows directly from the definition and $n=2,3$ can be seen in the discussion before the theorem.  Assuming Equation \eqref{eq1} holds for $n-1$, we have
\begin{align*}
&{_c}D_{a_1,\ldots,a_n}^{(n)}F(x) = \;{_c}D_{a_n}\left(\;{_c}D_{a_1,\ldots,a_{n-1}}^{(n-1)}F(x)\right)  \\
& =  \;{_c}D_{a_n}\left(F\left(x+\sum_{i=1}^{n-1} a_i\right)-\sum_{i=0}^{n-2} \sum_{1\leq j_1 < \ldots < j_i \leq n-2 } c^{n-2-i} \;{_c}D_{a_{j_1},\ldots,a_{j_i}}^{(i)}F(x)\right)\\
& = F\left(x+\sum_{i=1}^n a_i\right) - c F\left(x+\sum_{i=1}^{n-1} a_i\right) \\
& - \;{_c}D_{a_n}\left(\sum_{i=0}^{n-2} \sum_{1\leq j_1 < \ldots < j_i \leq n-2 } c^{n-2-i} \;{_c}D_{a_{j_1},\ldots,a_{j_i}}^{(i)}F(x)\right).
\end{align*}
We apply the induction hypothesis to $cF(x+a_1+\cdots+a_{n-1})$, and noticing the last double sum is composed of all the $c$-derivatives that include $a_n$, we have 
\begin{align*}
& F\left(x+\sum_{i=1}^n a_i\right) - c F\left(x+\sum_{i=1}^{n-1} a_i\right) \\
& - \;{_c}D_{a_n}\left(\sum_{i=0}^{n-2} \sum_{1\leq j_1 < \ldots < j_i \leq n-2 } c^{n-2-i} \;{_c}D_{a_{j_1},\ldots,a_{j_i}}^{(i)}F(x)\right)\\
& = F\left(x+\sum_{i=1}^n a_i\right) -c\left(\sum_{i=0}^{n-2} \sum_{1\leq j_1 < \ldots < j_i \leq n-2 } c^{n-2-i} \;{_c}D_{a_{j_1},\ldots,a_{j_i}}^{(i)}F(x)\right)\\
& - \left(\sum_{i=0}^{n-1} \sum_{1\leq j_1 < \ldots < j_i \leq n-1 } c^{n-1-i} \;{_c}D_{a_{j_1},\ldots,a_{j_i},a_{n}}^{(i)}F(x)\right) \\
& = F\left(x+\sum_{i=1}^n a_i\right) - \left(\sum_{i=0}^{n-2} \sum_{1\leq j_1 < \ldots < j_i \leq n-2 } c^{n-1-i} \;{_c}D_{a_{j_1},\ldots,a_{j_i}}^{(i)}F(x)\right)\\
& + \left(\sum_{i=0}^{n-1} \sum_{1\leq j_1 < \ldots < j_i \leq n-1 } c^{n-1-i} \;{_c}D_{a_{j_1},\ldots,a_{j_i},a_{n}}^{(i)}F(x)\right)\\
& =  F\left(x+\sum_{i=1}^n a_i\right) - \left(\sum_{i=0}^{n-1} \sum_{1\leq j_1 < \ldots < j_i \leq n-1 } c^{n-1-i} \;{_c}D_{a_{j_1},\ldots,a_{j_i}}^{(i)}F(x)\right).
\end{align*}
The claim is shown.
\end{proof}

While we have shown several properties of the $c$-derivative closely align with the traditional derivative, one key property does not follow.  A fundamental property of traditional derivatives is that the degree of a polynomial function is reduced by at least one for every derivative taken.  That is, $\deg(D_aF) \leq \deg(F) - 1$.  This is not always true in the case of $c$-derivatives when $c \neq 1$.  For example, consider the linearized monomial  $F(x)=x^{p^k}$ over $\F_{p^n}$ with $k$ an integer between 0 and $n$.  This function has degree 1 (recall that the $p$-ary weight of $p^k$ is 1) and the $c$-derivative of $F$ at $a$ is $(x+a)^{p^k}-cx^{p^k}=(1-c)x^{p^k}+a^{p^k}$, which is also of degree 1 for all $c \neq 1$. Thus, the reduction of degree is not a general property of the $c$-derivative. 

We will now show that higher order $c$-derivatives are invariant under permutation of the $a_i$'s.

\begin{prop}
Let $F:\mathbb{F}_{p^n}\to\mathbb{F}_{p^n}$, denote $[t]:=\{1,\ldots,t\}$, and let $|I|$ be the cardinality of the subsets ${I\subseteq [t]}$. Then, $$
{_c}D_{a_1,\ldots,a_t}^{(t)}F(x)=\sum_{I\subseteq [t]}(-c)^{t-|I|}F\left(x+\sum_{i\in I} a_i\right).
$$ In particular, for any permutation $\pi$ of $\{1,\ldots,t\}$ we have ${_c}D_{a_1,\ldots,a_t}^{(t)}F(x)= {{_c}D^{(t)}_{a_{\pi(1)},\ldots,a_{\pi(t)}}}F(x)$.
\end{prop}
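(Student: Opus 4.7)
The plan is to prove the closed-form expansion by induction on $t$; the permutation invariance will then be immediate since the right-hand side depends only on the \emph{collection} of subsets $I\subseteq[t]$, not on how the indices are labeled.

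For the base case $t=1$, the only subsets of $[1]$ are $\emptyset$ and $\{1\}$, giving $(-c)F(x)+F(x+a_1)$, which is exactly the definition of ${_c}D_{a_1}F(x)$. For the inductive step I would write
$${_c}D_{a_1,\ldots,a_t}^{(t)}F(x) \;=\; {_c}D_{a_t}\!\left({_c}D_{a_1,\ldots,a_{t-1}}^{(t-1)}F(x)\right),$$
substitute the inductive hypothesis inside, and distribute ${_c}D_{a_t}$ across the sum using the sum rule proved earlier in this section together with pull-out of scalar constants. A typical summand $(-c)^{t-1-|I|}F\!\left(x+\sum_{i\in I}a_i\right)$ with $I\subseteq[t-1]$ then contributes
$$(-c)^{t-1-|I|}F\!\left(x+a_t+\sum_{i\in I}a_i\right) + (-c)^{t-|I|}F\!\left(x+\sum_{i\in I}a_i\right),$$
the second exponent arising by merging the factor $-c$ with $(-c)^{t-1-|I|}$.

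The main bookkeeping step is then to re-index by subsets $J\subseteq[t]$: the first family, under $J=I\cup\{t\}$, exhausts the subsets of $[t]$ containing $t$, and since $|J|=|I|+1$ we have $(-c)^{t-1-|I|}=(-c)^{t-|J|}$; the second family, under $J=I$, exhausts those not containing $t$, and $(-c)^{t-|I|}=(-c)^{t-|J|}$ on the nose. Collecting both families yields $\sum_{J\subseteq[t]}(-c)^{t-|J|}F\!\left(x+\sum_{i\in J}a_i\right)$, closing the induction. The permutation claim follows at once: reordering the $a_i$'s only relabels the subsets $I\subseteq[t]$, leaving both $(-c)^{t-|I|}$ and the shift $\sum_{i\in I}a_i$ intact, so the sum is invariant.

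No step is conceptually hard; the only care required is in the index accounting when splitting subsets of $[t]$ according to whether they contain the new index $t$, and in confirming that the sign/exponent tracking of $-c$ absorbing into $(-c)^{t-1-|I|}$ lands on $(-c)^{t-|J|}$ in both halves of the partition.
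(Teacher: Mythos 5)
Your proof is correct and follows essentially the same route as the paper's: induction on $t$, expanding ${_c}D_{a_t}$ of the $(t-1)$-st derivative and re-indexing the resulting subsets of $[t]$ according to whether they contain the new index $t$, with the sign bookkeeping $-c\cdot(-c)^{t-1-|I|}=(-c)^{t-|I|}$ handled exactly as in the paper. The permutation-invariance conclusion is drawn the same way, from the symmetry of the closed-form sum.
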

\begin{proof}
It is easy to see that  ${_c}D_{a_1,a_2}^{(2)}F(x)=\sum_{I\subseteq [2]}(-c)^{2-|I|}F(x+\sum_{i\in I} a_i)
$,
and by induction we get
$$
\begin{aligned}
&{_c}D_{a_1,\ldots,a_t}^{(t)}F(x)={_c}D_{a_1,\ldots,a_{t-1}}^{(t-1)}F(x+a_t)-c {_c}D_{a_1,\ldots,a_{t-1}}^{(t-1)}F(x)\\
&=\sum_{I\subseteq [t-1]}(-c)^{(t-1)-|I|}F\left(x+a_t+\sum_{i\in I} a_i\right)-c\sum_{I\subseteq [t-1]}(-c)^{(t-1)-|I|}F\left(x+\sum_{i\in I} a_i\right)\\
&=\sum_{\substack{I'\subseteq [t]\\a_t\in I'}}(-c)^{(t-1)-(|I'|-1)}F\left(x+\sum_{i\in I'} a_i\right)+\sum_{\substack{I'\subseteq [t]\\a_n\notin I'}}(-c)^{t-|I'|}F\left(x+\sum_{i\in I'} a_i\right)\\
&=\sum_{I\subseteq [t]}(-c)^{t-|I|}F\left(x+\sum_{i\in I} a_i\right).
\end{aligned}
$$
From this we can see that permuting the elements $a_i$ does not change the value of the higher order $c$-derivative.
\end{proof}
Another easy fact to check is that for power functions the $t$-order c-differential uniformity can be computed by considering $a_1=1$. 
\begin{prop}
Let $F(x)=x^d$ on $\F_{p^n}$ and denote by $_c\Delta(a_1,\ldots,a_t;b)=\#\{x\,:\,{_c}D_{a_1,\ldots,a_t}^{(t)}F(x)=b\}$. Then, assuming that not all $a_i$'s are zero \textup{(}and without loss of generality we can assume that $a_1\ne 0$\textup{)}, ${_c}\Delta(a_1,\ldots,a_t;b)={_c}\Delta\left(1,a_2/a_1\ldots,a_t/a_1;b/(a_1)^d\right)$.
\end{prop}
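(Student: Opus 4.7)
The plan is to exploit the homogeneity $F(\alpha x) = \alpha^d F(x)$ of the power function $F(x) = x^d$ together with the closed-form expression for the higher order $c$-derivative from the preceding proposition:
\[
{_c}D_{a_1,\ldots,a_t}^{(t)}F(x) = \sum_{I\subseteq [t]}(-c)^{t-|I|}F\left(x + \sum_{i \in I} a_i\right).
\]

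First, I would substitute $x = a_1 y$, which is a bijection of $\F_{p^n}$ since $a_1 \neq 0$. In each summand the argument becomes $a_1 y + \sum_{i\in I} a_i = a_1\bigl(y + \sum_{i\in I} a_i/a_1\bigr)$, and homogeneity pulls the common factor $a_1^d$ outside the entire sum. Setting $b_1 = 1$ and $b_i = a_i/a_1$ for $i \geq 2$, so that the $i = 1$ contribution $a_1/a_1$ is exactly $b_1$, the remaining sum has the shape guaranteed by the previous proposition and equals ${_c}D_{b_1,\ldots,b_t}^{(t)}F(y)$. This yields the scaling identity
\[
{_c}D_{a_1,\ldots,a_t}^{(t)}F(a_1 y) = a_1^d \;{_c}D_{1,\, a_2/a_1,\ldots,\, a_t/a_1}^{(t)}F(y).
\]

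To finish, I would translate this functional identity into a statement about level sets. The equation ${_c}D_{a_1,\ldots,a_t}^{(t)}F(x) = b$ becomes, under the bijective substitution $x = a_1 y$, the equation ${_c}D_{1,\, a_2/a_1,\ldots,\, a_t/a_1}^{(t)}F(y) = b/a_1^d$. Because $y \mapsto a_1 y$ is a bijection on $\F_{p^n}$, these two solution sets are in one-to-one correspondence, hence share the same cardinality, which is exactly the claim ${_c}\Delta(a_1,\ldots,a_t;b) = {_c}\Delta(1, a_2/a_1,\ldots, a_t/a_1;\, b/a_1^d)$. The argument is essentially a change of variables driven by homogeneity, so no step is really delicate; the only thing to watch is the bookkeeping of the $i=1$ index, whose normalized value is $1$ rather than $0$, and the explicit sum formula from the previous proposition accommodates this automatically.
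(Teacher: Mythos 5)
Your argument is correct: the homogeneity $F(\alpha x)=\alpha^d F(x)$ combined with the explicit sum formula from the preceding proposition gives the scaling identity ${_c}D_{a_1,\ldots,a_t}^{(t)}F(a_1y)=a_1^d\,{_c}D_{1,a_2/a_1,\ldots,a_t/a_1}^{(t)}F(y)$, and the bijective change of variables $x=a_1y$ then matches the two solution sets. The paper itself omits the proof (calling it an easy fact to check), and your change-of-variables argument is exactly the intended one.
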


One of the key findings of higher order derivatives of binary functions is that if the $i$ inputs are not linearly independent, then the $i$th derivative is exactly 0.  That is, if $a_1, a_2, \ldots , a_i$ are linearly dependent, then $D_{a_1,\ldots,a_i}^{(i)} F(x)= 0$.  This limits the number of pairs that can be attempted in a higher order differential attack to the dimension of the vector space and reduces the combinations of differences that can be traced simultaneously.  However, this property, and therefore the limits, do not apply for higher order $c$-derivatives when $c \neq 1$, which  can be seen by considering the definition of the $c$-derivative.  If we let $a=0$, then 
\[
 {_c}D_{0}F(x) =  F(x + 0) - cF(x) = (1-c)F(x).
\]

Thus, even in the extreme case of zero difference between the input pairs, the $c$-derivative results in a nonzero function.  In higher order $c$-derivatives this property remains true.  For example the 2nd $c$-derivative has the form 
\[
{_c}D_{a_1,a_2}^{(2)}F(x) = F(x+a_1+a_2)-cF(x+a_2)- cF(x+a_1)+c^2F(x).
\]

Even if $a_1=a_2$ (and thus linearly dependent), the 2nd $c$-derivative is not identically zero due to the introduction of the $c$ multiplier.  This fact increases the input (or output) differences that can be traced through an encryption scheme and potentially increases the vulnerability of a cipher if a $c$-differential attack is realized in the future.

As for the 1st order derivative, for an $(n,m,p)$-function we can introduce the {\em $t$-order $c$-differential uniformity} of $F$ at $c$ to be  
\[
\displaystyle
_{c}\delta^{(t)}_{F}=\max_{a_1,\ldots,a_t\in\mathbb{F}_{p^n}}  {_c}D_{a_1,\ldots,a_t}^{(t)}F(x)=b
\]
(if $c=1$, not all $a_i$'s are allowed to be zero).  If $t=1$, we recover the $c$-differential uniformity ${_c}\delta^{(1)}_{F}={_c}\delta_{F}$, as defined in~\cite{EFRST20}.

\begin{prop} 
\label{lowerbound} 
For any $(n,m)$-function $F$, any order $t\in\Z_+$ (positive integers), and any $c\neq1$, the $t$-order $c$-differential uniformity of $F$ is greater than or equal to its $(t-1)$-order $c$-differential uniformity,
$\delta^{(t)}_{F,c}\geq \delta^{(t-1)}_{F,c}$.
\end{prop}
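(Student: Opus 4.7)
The plan is to exploit the fact that, unlike the ordinary derivative, the $c$-derivative with step $a=0$ is not trivial: by the very observation made in the paragraph preceding the proposition, $_cD_0 G(x)=(1-c)G(x)$ for any function $G$. Since the higher order $c$-derivative is built by composing first order $c$-derivatives, this lets me insert a ``free'' step $a_t=0$ into an optimal $(t-1)$-order configuration and pay only a scalar factor of $1-c$, which is invertible precisely because we assume $c\ne 1$.

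More concretely, I would first record the identity
\[
{_c}D_{a_1,\ldots,a_{t-1},\,0}^{(t)}F(x)
={_c}D_0\!\left({_c}D_{a_1,\ldots,a_{t-1}}^{(t-1)}F(x)\right)
=(1-c)\,{_c}D_{a_1,\ldots,a_{t-1}}^{(t-1)}F(x),
\]
which follows from the recursive definition of the higher order $c$-derivative together with the computation $_cD_0 G(x)=G(x)-cG(x)=(1-c)G(x)$. Next, let $(a_1^\ast,\ldots,a_{t-1}^\ast)$ and $b^\ast\in\F_{p^m}$ be parameters attaining the maximum in $_{c}\delta^{(t-1)}_{F}$, so that the equation ${_c}D_{a_1^\ast,\ldots,a_{t-1}^\ast}^{(t-1)}F(x)=b^\ast$ has exactly $_{c}\delta^{(t-1)}_{F}$ solutions $x\in\F_{p^n}$.

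Applying the identity with these choices of $a_i^\ast$ and $a_t=0$, the solution set of
\[
{_c}D_{a_1^\ast,\ldots,a_{t-1}^\ast,\,0}^{(t)}F(x)=(1-c)\,b^\ast
\]
coincides with the solution set of ${_c}D_{a_1^\ast,\ldots,a_{t-1}^\ast}^{(t-1)}F(x)=b^\ast$, because multiplication by the nonzero scalar $1-c$ is a bijection of $\F_{p^m}$. Hence there are at least $_{c}\delta^{(t-1)}_{F}$ values of $x$ realizing the output $(1-c)b^\ast$ of the $t$-order $c$-derivative at $(a_1^\ast,\ldots,a_{t-1}^\ast,0)$, and therefore $_{c}\delta^{(t)}_{F}\ge {_{c}\delta^{(t-1)}_{F}}$.

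I do not expect a real obstacle here: the argument is essentially the observation that setting $a_t=0$ is now a legitimate choice when $c\ne 1$ (which is precisely why the $c=1$ case was excluded both in the definition and in the hypothesis of the proposition). The only point worth flagging in writing the proof is to state explicitly where $c\ne 1$ is used, namely in asserting that $x\mapsto (1-c)x$ is a bijection on $\F_{p^m}$, so that no solutions are gained or lost when passing from $b^\ast$ to $(1-c)b^\ast$.
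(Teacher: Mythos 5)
Your argument is correct and is exactly the paper's proof: take $a_t=0$, use ${_c}D^{(t)}_{a_1,\ldots,a_{t-1},0}F(x)=(1-c)\,{_c}D^{(t-1)}_{a_1,\ldots,a_{t-1}}F(x)$, and conclude via the invertibility of multiplication by $1-c$. You simply spell out the final counting step (the bijection $x\mapsto(1-c)x$ on $\F_{p^m}$) that the paper leaves implicit.
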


\begin{proof} For any $(n,m)$-function $F$ and any $c\neq1$, and taking $a_t=0$, we obtain that ${_c}D_{a_1,a_2,\ldots,a_{t-1},0}^{(t)}F(x) =(1-c)\, {_c}D_{a_1,a_2,\ldots,a_{t-1}}^{(t-1)}F(x)$, which implies our claim.
\end{proof}

\section{The inverse function}
\label{secinv}

Next we consider an example of a higher order $c$-derivative and compare it to the traditional higher order derivative (i.e. when $c=1$).   The function we investigate is the multiplicative inverse function over finite fields of characteristic 2, a popular function used in S-boxes that can be represented by a monomial $F:\F_{2^n}\rightarrow \F_{2^n}$, $F(x)=x^{2^n-2}$.  In \cite{EFRST20} the authors investigate the first $c$-derivative of this function, focusing on the newly defined $c$-differential uniformity property.  Specifically, they count the maximum number of solutions of ${_c}D_{a}F(x) = b$ for some $a,b,c \in \F_{2^n}$.  In this section we count solutions to the second order $c$-differential equation  ${_c}D_{a_1,a_2}^{(2)}F(x) = b$ with $c, a_1,a_2,b \in \F_{2^n}$. 

The traditional ($c=1$) second order differential spectrum of the inverse function over $\F_{2^n}$ was recently investigated in \cite{TMM21}.  It was shown that for $n \geq 3$ the number of solutions to $D_{a_1,a_2}x^{2^n-2} = b$ is in the set $\{0,4,8\}$ and that there are multiple $a_1, a_2, b$ that provide 8 solutions for $n \geq 6$.  

With this understanding of the behavior of the second traditional derivative of the inverse function, we now consider the second $c$-derivative of the inverse function and compare the two. Starting with ${_c}D_{a_1,a_2}^{(2)}F(x) = b$, we have 
\begin{equation} \label{eq4}
(x+a_1+a_2)^{2^n-2}+c(x+a_2)^{2^n-2}+c(x+a_1)^{2^n-2}+c^2x^{2^n-2}=b.
\end{equation}

It was shown in \cite{EFRST20} that the inverse function has a bijective first $c$-derivative when $c=0$.  Later in \cite{SGGRT20} we showed that any permutation will have a bijective (i.e. balanced) first $c$-derivative when $c=0$.  For the second $c$-derivative of the inverse function, when $c=0$, we get $(x+a_1+a_2)^{2^n-2}=b$.  If $b=0$, $x=a_1+a_2$ is the only solution.  If $b \neq 0$, then $x \neq a_1+a_2$ and $\frac{1}{x+a_1+a_2}=b$.  Thus, $x=\frac{1}{b}+a_1+a_2$ is the only solution and we see when $c=0$ the second $c$-derivative of the inverse function is a bijection, as is in the case of the first $c$-derivative when $c=0$.  

In fact, from Theorem \ref{higher}, we see that the $n$th $0$-derivative of a function $F$ is $F(x+a_1+a_2+\cdots+a_n)$, which is bijective if and only if $F$ is bijective.  Thus permutations have bijective $n$-th $c$-derivatives for all $n$ when $c=0$.  

\vspace{3 mm}
For $c \neq 0$, we consider multiple cases.  

\noindent
 {\em Case $(i)$.} Let $a_1=a_2$. Recall this leads to a trivial result in traditional derivatives.  Equation~\eqref{eq4} becomes 
  \allowdisplaybreaks
$
 x^{2^n-2}+c^2x^{2^n-2}=b \text{, that is, }
 (1+c^2)x^{2^n-2}=b.
 $
When $b=0$, $x=0$ is the only solution. If $b \neq 0$, then $x \neq 0$ and $\frac{1+c^2}{x}=b$ gives us one solution $x=\frac{1+c^2}{b}$.

\noindent
{\em Case $(ii)$.} $a_1 \neq a_2$, and $x = a_1, a_2, a_1+a_2$, or 0.  Equation~\eqref{eq4} becomes, respectively, 
\begin{align*}
& a_2^{2^n-2}+c(a_1+a_2)^{2^n-2}+c^2a_1^{2^n-2}=b \text{, or,}\\
& a_1^{2^n-2}+c(a_1+a_2)^{2^n-2}+c^2a_2^{2^n-2}=b \text{, or,}\\
& ca_2^{2^n-2}+ca_1^{2^n-2}+c^2(a_1+a_2)^{2^n-2}=b \text{, or,}\\
& (a_1+a_2)^{2^n-2}+ca_2^{2^n-2}+ca_1^{2^n-2}=b.
\end{align*}
When $c=1$ all four of these solutions are the same and can be true simultaneously.  However, when $c \neq 1$ we cannot combine all four of these solutions.  In fact, the most that can be combined are two.  Consider the solutions for $a_1$ and $a_2$ (the first two above).  If we could combine these, then we would have,
\[
a_2^{2^n-2}+c(a_1+a_2)^{2^n-2}+c^2a_1^{2^n-2} = a_1^{2^n-2}+c(a_1+a_2)^{2^n-2}+c^2a_2^{2^n-2},
\]
which simplifies to
\begin{align*}
 a_2^{2^n-2}+c^2a_1^{2^n-2}= a_1^{2^n-2}+c^2a_2^{2^n-2}\text{, or,}
 (1+c^2)a_1^{2^n-2}=(1+c^2)a_2^{2^n-2}.
\end{align*}
For $c \neq 1$ (which is an assumption throughout), $a_1$ must equal $a_2$ which is not true in this case.  Therefore, the solutions cannot be combined and we have that no more than three solutions can be true simultaneously.

Now we consider the possibility of combining $x=0$ with $x=a_1+a_2$.  This gives us
\[
ca_2^{2^n-2}+ca_1^{2^n-2}+c^2(a_1+a_2)^{2^n-2}=(a_1+a_2)^{2^n-2}+ca_2^{2^n-2}+ca_1^{2^n-2},
\]
which simplifies to
$c^2(a_1+a_2)^{2^n-2}=(a_1+a_2)^{2^n-2}.$
This is only true when $c=1$ or $a_1=a_2$, neither of which are allowed in this case. From this we immediately see that we can combine at most two of the solutions in Case $(ii)$.
There are only at most four values of $c$ which allow the combination of two of these solutions. As we have seen, there are only four possible combinations (which in some cases might be equal): $x=0$ and $x=a_1$, $x=0$ and $x=a_2$, $x=a_1+a_2$ and $x=a_1$, and $x=a_1+a_2$ and $x=a_2$.

Let $x=0$ and $x=a_1$ be both solutions of the equation above. Then, 
\[ (a_1+a_2)^{2^n-2}+ca_2^{2^n-2}+ca_1^{2^n-2}=a_2^{2^n-2}+c(a_1+a_2)^{2^n-2}+c^2a_1^{2^n-2}.
\]
Rearranging terms, we arrive at 
\[
(1+c)(a_1+a_2)^{2^n-2}+(1+c)a_2^{2^n-2}+c(1+c)a_1^{2^n-2}=0,
\]
which, since $c\neq1$, simplifies to
$ (a_1+a_2)^{2^n-2}+a_2^{2^n-2}+ca_1^{2^n-2}=0.$

If $a_1=0$, then $x=0$ and $x=a_1$ are the same solution, so we can assume that $a_1\neq0$. If $a_2=0$, we arrive at the equation $(1+c)a_1^{2^n-2}=0$, which only has the forbidden solutions $c=1$ or $a_1=0$. We can then assume that $a_1a_2\neq0$. The equation becomes then
$c(a_1+a_2)a_2+a_1^2=0$,
which has a single solution $c_0=\frac{a_1^2}{(a_1+a_2)a_2}$. It is easy to see that $c_0=0$ if and only if $a_1=0$. However, it is possible to obtain that $c_0=1$ if $a_1^2+a_2^2+a_1a_2=0$, which is achievable only if $n$ is even and $a_1=a_2\omega$ or $a_1=a_2\omega^2$, where $\F_4=\{0,1,\omega,\omega^2\}$. As long as $n\geq5$, we can always chose valid $a_1,a_2$ to ensure that $c_0\neq1$. 
By symmetry, $x=0$ and $x=a_2$ give $c_1=\frac{a_2^2}{(a_1+a_2)a_1}$, with the same conditions as $x=0$ and $x=a_1$.

Now, if $x=a_1+a_2$ and $x=a_1$, then 
\[ca_2^{2^n-2}+ca_1^{2^n-2}+c^2(a_1+a_2)^{2^n-2}=a_2^{2^n-2}+c(a_1+a_2)^{2^n-2}+c^2a_1^{2^n-2},
\]
which, by rearranging, becomes
$ (1+c)a_2^{2^n-2}+c(1+c)a_1^{2^n-2}+c(1+c)(a_1+a_2)^{2^n-2}=0,$
and, since $c\neq1$, this can be simplified to
$a_2^{2^n-2}+ca_1^{2^n-2}+c(a_1+a_2)^{2^n-2}=0.
$
If $a_1=0$, then we have the case $x=0, x=a_2$. If $a_2=0$, we do not have two different solutions. We can then assume $a_1a_2\neq0$. Then, the equation is equivalent to
$(a_1+a_2)a_1+ca_2^2=0$,
which has the solution $c_2=\frac{a_1(a_1+a_2)}{a_2^2}$. It is easy to see that $c_2\neq0$, and that $c\neq1$ under the same conditions as for $x=0$ and $x=a_1$. By symmetry, $x=a_1+a_2$ and $x=a_1$ gives $c_3=\frac{a_2(a_1+a_2)}{a_1^2}$.

\noindent
{\em Case $(iii)$.} $a_1 \neq a_2$, $x \neq a_1, a_2, a_1+a_2$, or 0.  Equation~\eqref{eq4} becomes
\[
\frac{1}{x+a_1+a_2} + \frac{c}{x+a_1} + \frac{c}{x+a_2} + \frac{c^2}{x} = b.
\]
Multiplying through by $(x+a_1+a_2)(x+a_1)(x+a_2)x$, collecting and rearranging terms, we arrive at
\begin{align}
bx^4 + & \left(1+c^2\right)x^3 + \left(a_2 + ca_2 + ba_2^2 + a_1 + ca_1 + ba_1^2 + ba_1a_2\right)x^2\nonumber\\
 + & \left(a_1a_2+ca_2^2 + c^2 a_2^2 + ca_1^2 + c^2 a_1^2 + c^2 a_1a_2 + ba_1a_2^2 + ba_1^2a_2\right)x\label{eq5}\\
&\qquad\qquad\qquad \qquad\qquad\qquad \qquad\qquad\quad +c^2a_1a_2 (a_1+a_2) = 0.\nonumber
\end{align} 

This quartic polynomial has at most four solutions when $b \neq 0$ and at most three when $b=0$. 
Without the four guaranteed solutions from Case $(ii)$, we cannot reach the 8 solutions possible when $c=1$.  This means that, as in the case of the first $c$-derivative of the inverse function, when $c \neq 1$ the differential counts \emph{decreases} from the traditional case. In fact, when combining Cases $(ii)$ and $(iii)$, a maximum of 6 solutions is possible, if $c\neq1$.  

\begin{theorem}
Let $n\ge 4$, and $F(x)=x^{2^n-2}$ over $\mathbb{F}_{2^n}$. Then, for any $c\in\mathbb{F}_{2^n}\setminus\{1\}$, ${_c}\delta_F^{(2)}\le 6$.
\end{theorem}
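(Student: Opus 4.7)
The plan is to read the theorem as the formal consolidation of the three-case analysis developed in the paragraphs immediately preceding it. Writing $N(a_1,a_2,b)$ for the number of $x\in\F_{2^n}$ satisfying ${_c}D_{a_1,a_2}^{(2)}F(x)=b$, the goal is to show $N\le 6$ for every choice of $a_1,a_2,b$ when $c\ne 1$. The sub-cases $c=0$ (where ${_c}D_{a_1,a_2}^{(2)}F(x)=F(x+a_1+a_2)$ is a bijection by Theorem~\ref{higher}, so $N=1$) and $a_1=a_2$ (Case (i), reducing to $(1+c^2)x^{2^n-2}=b$ with at most one solution) are immediate; assume henceforth $c\notin\{0,1\}$ and $a_1\ne a_2$.

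Partition the potential solutions into the exceptional set $E=\{0,a_1,a_2,a_1+a_2\}$ (Case (ii)) and its complement (Case (iii)). For the generic contribution in Case (iii), multiplying the defining equation by $x(x+a_1)(x+a_2)(x+a_1+a_2)$, which is nonzero for $x\notin E$, produces the quartic identity~\eqref{eq5}, so the number of solutions with $x\notin E$ is at most $4$.

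The main step is to bound the exceptional contribution by $2$. From the pair-by-pair analysis in the excerpt, the simultaneous solvability of any pair from $E$ reduces to one of the following alternatives: the pairs $\{0,a_1+a_2\}$ and $\{a_1,a_2\}$ each force $c=1$ or $a_1=a_2$, and are therefore forbidden; while the remaining four pairs each force a single admissible value of $c$ (namely $c_0,c_1,c_2,c_3$). The combinatorial observation that seals the bound is that every $3$-subset of $E$ contains one of the two forbidden pairs: the triples $\{0,a_1,a_2\}$ and $\{a_1,a_2,a_1+a_2\}$ each include $\{a_1,a_2\}$, while $\{0,a_1,a_1+a_2\}$ and $\{0,a_2,a_1+a_2\}$ each include $\{0,a_1+a_2\}$. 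Hence no triple from $E$ can consist entirely of solutions, and the exceptional contribution is at most $2$. Summing the two contributions yields $N\le 2+4=6$, proving the claim. The only nontrivial step is the rational-function bookkeeping that classifies the six pairs as either forbidden or $c$-determining; this bookkeeping is already carried out in the preceding paragraphs, so the theorem follows by the above combinatorial synthesis.
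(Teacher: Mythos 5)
Your proposal is correct and follows essentially the same route as the paper: the paper's own justification is precisely the three-case discussion preceding the theorem (bijectivity for $c=0$, one solution for $a_1=a_2$, at most two solutions among $\{0,a_1,a_2,a_1+a_2\}$ because the pairs $\{a_1,a_2\}$ and $\{0,a_1+a_2\}$ are each incompatible when $c\neq 1$ and $a_1\neq a_2$, and at most four solutions from the quartic \eqref{eq5}), summed to give $2+4=6$. Your explicit observation that every $3$-subset of the exceptional set contains one of the two forbidden pairs is just a cleaner phrasing of the paper's ``we can combine at most two of the solutions in Case $(ii)$.''
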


Some computations to demonstrate our findings are captured in Table~\ref{table1}. From here, we see that the maximum is attainable for $n=8,9$. We conjecture that it is attainable for all $n\geq8$.

  \begin{table}[ht]
            \begin{center}
               \begin{tabular}{|c||c|c|c|}
               \hline
               $n$ & $c=1$ & $c \neq 1$ & $c=0$ \\
               
               \hline
               4 & 4 & 5 & 1\\
               5 & 4 & 4 & 1\\
               6 & 8 & 5 & 1\\
               7 & 8 & 5 & 1\\
               8 & 8 & 6 & 1\\
               9 & 8 & 6 & 1\\
            
               \hline
               \end{tabular}
            \end{center}
            \caption{Maximum number of solutions to ${_c}D_{a_1,a_2}^{(2)}x^{2^n-2} = b$}
            \label{table1}
      \end{table}
      
      \section{The Gold function}
      \label{gold}

The $c$-differential uniformity for quadratic functions was characterized in~\cite{BC20}, where the authors focus on the PcN and APcN case though their proof applies for the general case. In particular from Theorem 3.1 in \cite{BC20} we can obtain the following result.

\begin{thm}
Let $q=p^h$, $F:\mathbb{F}_{p^n}\to\mathbb{F}_{p^n}$ be a quadratic function given by 
$$
\sum_{i,j}c_{i,j}x^{q^i+q^j}+\sum_{l}c_lx^{p^l}.
$$
Let $\delta=\max_{b\in\mathbb{F}_{p^n}}|F^{-1}(b)|$. Then, for $c\in\mathbb{F}_{p^{\gcd(n,h)}}\setminus\{1\}$, ${_c}\delta_F^{(t)}=\delta$.
\end{thm}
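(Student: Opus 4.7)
The plan is to combine two ingredients: the subset-sum formula for the $t$-th $c$-derivative from the preceding proposition, and the quadratic structure of $F$ captured by its polar form
\[
B(x,a)=\sum_{i,j}c_{i,j}\bigl(a^{q^i}x^{q^j}+a^{q^j}x^{q^i}\bigr),
\]
which satisfies $F(x+a)=F(x)+B(x,a)+F(a)$ and is $\F_p$-linear in each variable. Combining these should collapse ${_c}D^{(t)}_{a_1,\ldots,a_t}F(x)$ into a scalar multiple of $F(x)$ plus a linear-in-$x$ piece plus a constant; a linear change of variables will then turn ${_c}D^{(t)}_{a_1,\ldots,a_t}F(x)=b$ into $(1-c)^tF(y)=b^*$, reducing the count to fiber sizes of $F$.

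Starting from $\sum_{I\subseteq[t]}(-c)^{t-|I|}F(x+s_I)$ with $s_I=\sum_{i\in I}a_i$ and expanding each $F(x+s_I)$ via the polar form, the coefficient of $F(x)$ is $\sum_I(-c)^{t-|I|}=(1-c)^t$ by the binomial theorem. For the linear-in-$x$ piece $\sum_I(-c)^{t-|I|}B(x,s_I)$, the additivity of $q^i$-th powers gives $s_I^{q^i}=\sum_{k\in I}a_k^{q^i}$, and a double-counting identity $\sum_{I\ni k}(-c)^{t-|I|}=(1-c)^{t-1}$ (obtained by parametrising the subsets containing a fixed $k$ by subsets of $[t]\setminus\{k\}$ and reapplying the binomial theorem) collapses the piece to $(1-c)^{t-1}B(x,A)$, where $A=a_1+\cdots+a_t$. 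The net identity I expect is
\[
{_c}D^{(t)}_{a_1,\ldots,a_t}F(x)=(1-c)^tF(x)+(1-c)^{t-1}B(x,A)+K,
\]
with $K$ a constant depending on the $a_i$'s and on $c$ but not on $x$.

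To close the upper bound I would substitute $x=y+A/(c-1)$, which is legitimate since $c\ne 1$. The resulting cross term $(1-c)^tB(y,u)+(1-c)^{t-1}B(y,A)$ vanishes provided $(1-c)B(y,u)=B(y,(1-c)u)$, i.e.\ provided scalars from $\F_q$ can be pulled inside the polar form. This is exactly where the hypothesis $c\in\F_{p^{\gcd(n,h)}}=\F_{p^n}\cap\F_q$ is essential: it guarantees $c^{q^i}=c$, hence $(1-c)^{q^i}=1-c$, for every $i$. The equation then reduces to $(1-c)^tF(y)=b^*$, and since $(1-c)^t\ne 0$ this has at most $\delta$ solutions, giving ${_c}\delta_F^{(t)}\le\delta$. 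For the matching lower bound, setting $a_1=\cdots=a_t=0$ (permitted because $c\ne 1$) reduces the equation to $(1-c)^tF(x)=b$, and choosing $b$ so that $b/(1-c)^t$ lies in a largest fiber of $F$ gives exactly $\delta$ solutions. The step I expect to be the main obstacle is the collapse in the previous paragraph: cleanly bookkeeping the Frobenius exponents so that the double-counting identity applies term by term and reassembles into $B(x,A)$ rather than into a more complicated combination of the individual $a_i$'s; once this is in hand, the rest of the argument essentially reproduces the first-order case from Theorem~3.1 of~\cite{BC20} with $a$ replaced by $A$.
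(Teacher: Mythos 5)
Your proof is correct, and it reaches the same destination as the paper---the reduction of ${_c}D^{(t)}_{a_1,\ldots,a_t}F(x)=b$ to an equation of the form $(1-c)^tF(y)=b^*$---by a genuinely different route. The paper takes the first-order closed form from Theorem 3.1 of \cite{BC20}, namely ${_c}D_aF(x)=(1-c)F\left(x+\frac{a}{1-c}\right)+\beta$, and simply iterates it $t$ times to obtain ${_c}D^{(t)}_{a_1,\ldots,a_t}F(x)=(1-c)^tF\left(x+\frac{a_1+\cdots+a_t}{1-c}\right)+\beta'$; all of the work with the quadratic structure and with the hypothesis $c\in\F_{p^{\gcd(n,h)}}$ is absorbed into the citation. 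You instead expand the $t$-th derivative directly via the subset-sum formula from the preceding proposition, use the polar form together with the two binomial identities $\sum_{I\subseteq[t]}(-c)^{t-|I|}=(1-c)^t$ and $\sum_{I\ni k}(-c)^{t-|I|}=(1-c)^{t-1}$ to collapse it to $(1-c)^tF(x)+(1-c)^{t-1}B(x,A)+K$ with $A=a_1+\cdots+a_t$, and then kill the linear term by the shift $x\mapsto y+A/(c-1)$. The two closed forms agree (expand the paper's via the polar form and pull $1/(1-c)$ through the Frobenius twists), but your version is self-contained and makes explicit exactly where $c\in\F_{p^{\gcd(n,h)}}=\F_{p^n}\cap\F_q$ is needed, namely to commute $(1-c)^{-1}$ past $x\mapsto x^{q^i}$ inside $B$; the paper's write-up leaves this buried in the reference. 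You also spell out the attainment of the bound (via $a_1=\cdots=a_t=0$, which is legitimate since $c\ne1$), a point the paper compresses into ``the claim follows.''
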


\begin{proof}
From the proof of Theorem 3.1 in \cite{BC20}, we have that for $c\in\mathbb{F}_{p^{\gcd(n,h)}}\setminus\{1\}$ the $c$-derivative ${_c}D_aF(x)$ equals
$$
(1-c)F\left(x+\frac{a}{1-c}\right)+F(a)-(1-c)F\left(\frac{a}{1-c}\right)=(1-c)F\left(x+\frac{a}{1-c}\right)+\beta,
$$
with $\beta=F(a)-(1-c)F\left(\frac{a}{1-c}\right)$.
From this we can easily see that for any $a_1,\ldots,a_t$ the higher-order $c$-derivative of $F$ is 
$$
{_c}D_{a_1,\ldots,a_t}F(x)=(1-c)^tF\left(x+\frac{a_1+\cdots+a_t}{1-c}\right)+\beta',
$$
for some constant $\beta'$ depending on the $a_i$'s,
and the claim follows.
\end{proof}
 
%
%

\begin{thm}
\label{Gold2nd}
  Let $F(x)=x^{p^k+1}$ on $\mathbb{F}_{p^n}$, and $1\neq c\in \mathbb{F}_{p^n}$. Then, the maximum number of solutions of ${_c}D_{a_1,a_2}^{(2)}F(x) = b$ is $p^{\gcd(k,n)}+1$, and there exist some $b,a_1,a_2$ such that this bound is obtained.
  \end{thm}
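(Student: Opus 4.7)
The plan is to rewrite the equation ${_c}D_{a_1,a_2}^{(2)}F(x)=b$ in the Bluher form $y^{p^k+1}+Ay+C=0$, whose number of solutions in $\mathbb{F}_{p^n}$ is classically known, by Bluher's theorem (A.~Bluher, \emph{On $x^{q+1}+ax+b$}, Finite Fields Appl.~2004), to lie in $\{0,1,2,p^{\gcd(k,n)}+1\}$ whenever $A\neq 0$.

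First I would expand ${_c}D_{a_1,a_2}^{(2)}F(x)=F(x+a_1+a_2)-cF(x+a_1)-cF(x+a_2)+c^2F(x)$ using $(x+a)^{p^k+1}=x^{p^k+1}+ax^{p^k}+a^{p^k}x+a^{p^k+1}$ and the Frobenius identity $(a_1+a_2)^{p^k}=a_1^{p^k}+a_2^{p^k}$. With $s:=a_1+a_2$, the derivative collapses to
$$(1-c)^2 x^{p^k+1}+(1-c)s\,x^{p^k}+(1-c)s^{p^k}\,x + \gamma,$$
for an explicit constant $\gamma$ depending on $a_1,a_2,c$. Since $c\neq 1$, dividing the equation ${_c}D_{a_1,a_2}^{(2)}F(x)=b$ by $1-c$ puts it into the form
$$(1-c)x^{p^k+1}+sx^{p^k}+s^{p^k}x \;=\; B,\qquad(\ast)$$
with $B=B(a_1,a_2,b,c)$ explicit. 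The next step is the bijective affine substitution $y=(1-c)x+s$. After a direct computation (using Frobenius additivity and clearing the resulting $(1-c)^{p^k}$-denominators), $(\ast)$ becomes
$$y^{p^k+1}+Ay \;=\; B',\qquad(\ast\ast)$$
where $A=s^{p^k}\bigl[(1-c)^{p^k-1}-1\bigr]$ and $B'$ is another explicit constant. Since $x\mapsto y$ is a bijection of $\mathbb{F}_{p^n}$, the two equations have the same number of solutions.

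Now Bluher's theorem furnishes the upper bound: for $A\neq 0$ the number of solutions of $(\ast\ast)$ is at most $p^{\gcd(k,n)}+1$, while in the degenerate case $A=0$ --- which occurs precisely when $s=0$ or when $c\in\mathbb{F}_{p^{\gcd(k,n)}}$, since $(1-c)^{p^k-1}=1$ iff $1-c\in\mathbb{F}_{p^{\gcd(k,n)}}$ --- the equation collapses to $y^{p^k+1}=B'$, having at most $\gcd(p^k+1,p^n-1)\leq p^{\gcd(k,n)}+1$ solutions. For the existence of $a_1,a_2,b$ attaining the bound, I would run Bluher in reverse: fix a pair $(A_0,C_0)$ with $A_0\neq 0$ for which $y^{p^k+1}+A_0 y+C_0=0$ has exactly $p^{\gcd(k,n)}+1$ roots in $\mathbb{F}_{p^n}$ (Bluher guarantees such pairs); take $c\in\mathbb{F}_{p^n}\setminus\mathbb{F}_{p^{\gcd(k,n)}}$ so that $(1-c)^{p^k-1}\neq 1$, solve $A_0=s^{p^k}[(1-c)^{p^k-1}-1]$ for some $s\neq 0$, set $a_1=s$ and $a_2=0$, and finally read off $b$ by inverting the explicit relation $B'=-C_0$. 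The main obstacle is the algebra that produces $(\ast\ast)$: the substitution $y=(1-c)x+s$ is designed precisely so that $(1-c)x^{p^k+1}$ together with the mixed terms $sx^{p^k}$ and $s^{p^k}x$ combine into the monic $y^{p^k+1}$ plus a linear-in-$y$ correction, exactly matching the Bluher shape; once this is in hand Bluher's theorem does the remaining work.
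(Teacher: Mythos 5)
Your reduction is essentially the paper's own route: the same expansion of ${_c}D_{a_1,a_2}^{(2)}F(x)$, an affine substitution that kills the $x^{p^k}$ term (the paper uses $x=y-\frac{a_1+a_2}{1-c}$, which differs from your $y=(1-c)x+s$ only by the scaling $(1-c)$), the same resulting coefficient $A=s^{p^k}\bigl[(1-c)^{p^k-1}-1\bigr]$ up to normalization, and the same appeal to Bluher's theorem for the value set $\{0,1,2,p^{d}+1\}$ with $d=\gcd(k,n)$. The upper-bound half of your argument is correct, including the observation that the degenerate case $A=0$ (i.e.\ $s=0$ or $c\in\F_{p^{d}}$) yields $y^{p^k+1}=B'$ with at most $\gcd(p^k+1,p^n-1)\le p^{d}+1$ solutions.

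The gap is in the attainment step. Bluher does \emph{not} always supply a pair $(A_0,C_0)$ with $A_0\neq 0$ for which $y^{p^k+1}+A_0y+C_0$ has $p^{d}+1$ roots: her count of normalized parameters $\beta$ for which $z^{p^k+1}+z+\beta$ has $p^{d}+1$ roots in $\F_{p^n}$ is $\frac{p^{n-d}-p^{d}}{p^{2d}-1}$ when $m=n/d$ is even, and this vanishes exactly when $m=2$ --- which is the most classical Gold setting $n=2k$. (Sanity check over $\F_4$: three distinct roots of $x^3+ax+b$ must sum to $0$, hence be $\{1,\omega,\omega^2\}$, forcing $a=\omega\cdot\omega^2+\omega+\omega^2=1+\omega+\omega^2=0$.) So ``running Bluher in reverse'' fails for $m=2$, and your construction produces no witness there. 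The paper closes this case through the branch you only used for the upper bound: when $m$ is even one has $\gcd(p^k+1,p^n-1)=p^{d}+1$, so the degenerate equation $y^{p^k+1}=B'$ (reached, e.g., by taking $a_2=-a_1$) already attains the bound; Bluher's positive count is invoked only for $m$ odd, $m\ge 3$. You have all the ingredients to repair this --- just add that the gcd \emph{equals} $p^{d}+1$ for $m$ even and use the degenerate case for existence there. A smaller issue, shared with the paper's own write-up: the theorem fixes $c$, yet the attainment argument chooses $c\notin\F_{p^{d}}$; for $c\in\F_{p^{d}}\setminus\{1\}$ and $m$ odd the true maximum is only $\gcd(p^k+1,p^n-1)$, cf.\ Corollary~\ref{Goldspecial}.
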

     \begin{proof} The equation ${_c}D_{a_1,a_2}^{(2)}F(x) = b$ is
 \[
 (x+a_1+a_2)^{p^k+1}-c(x+a_1)^{p^k+1}-c(x+a_2)^{p^k+1}+c^2x^{p^k+1}=b, 
 \]
 which renders 
  $(1-2c+c^2)x^{p^k+1}+(a_1+a_2)(1-c)x^{p^k}+(a_1+a_2)^{p^k}(1-c)x+(a_1+a_2)^{p^k+1}-c(a_1^{p^k+1}+a_2^{p^k+1})=b.$
 Since $c\neq1$, this yields
  {\small
  $$x^{p^k+1}+\frac{a_1+a_2}{1-c}x^{p^k}+\frac{(a_1+a_2)^{p^k}}{1-c}x+\frac{(a_1+a_2)^{p^k+1}-c(a_1^{p^k+1}+a_2^{p^k+1})}{(1-c)^2}=\frac{b}{(1-c)^2}.$$  
  }
 Taking $a_2=-a_1$, this equation becomes $x^{p^k+1}=\frac{b}{(1-c)^2},$ which has at most $\gcd(p^k+1,p^n-1)$ solutions, and exactly $\gcd(p^k+1,p^n-1)$ solutions for some $b$. This implies that the maximum number of solutions to ${_c}D_{a_1,a_2}^{(2)}F(x) = b$ is lower bounded by $\gcd(p^k+1,p^n-1)$ (also derived from Proposition~\ref{lowerbound}).
 
 Taking $a_1\neq a_2$, and $x=y-\frac{a_1+a_2}{1-c}$, we can write this equation as 
 $$
 y^{p^k+1}+(a_1+a_2)^{p^k}\frac{(1-c)^{p^k-1}-1}{(1-c)^{p^k}}y+\frac{a_1^{p^k}a_2+a_1a_2^{p^k}-b}{(1-c)^2}=0.
 $$
 If $(1-c)^{p^k-1}=1$, then we obtain the equation
 $y^{p^k+1}+\frac{a_1^{p^k}a_2+a_1a_2^{p^k}-b}{(1-c)^2}=0$, which has at most $\gcd(p^k+1,p^n-1)$ solutions, and exactly $\gcd(p^k+1,p^n-1)$ solutions for some $b$.  
 
 If $(1-c)^{p^k-1}\neq1$, and $b=a_1^{p^k}a_2+a_1a_2^{p^k}$, this equation can be written as
 $y\left(y^{p^k}+(a_1+a_2)^{p^k}\frac{(1-c)^{p^k-1}-1}{(1-c)^{p^k}}\right)=0,$
 which has at most $\gcd(p^k,p^n-1)+1=2$ solutions, and exactly  2 solutions for some $a_1,a_2$.
 
 Otherwise, taking $z=\alpha y$, where $\alpha^{p^k}=(a_1+a_2)^{p^k}\frac{(1-c)^{p^k-1}-1}{(1-c)^{p^k}}$ (note that the $p^k$-root exists since, for any $p$, $\gcd(p^k,p^n-1)=1$ and $x^{p^k}$ is therefore a permutation on $\F_{p^n}$), we get the equation
$
  z^{p^k+1}+z+\beta=0,
 $
 where $\beta=\alpha^{-(p^k+1)}\ \frac{a_1^{p^k}a_2+a_1a_2^{p^k}-b}{(1-c)^2}$.

It is easy to see that the equation fulfills the conditions imposed in~\cite{Bluher04}, where it is shown that there are either 0,1,2 or $p^d+1$ solutions to this equation, where $d=\gcd(k,n)$. Taking $p>2$, and if $m=\frac{n}{d}$ is even, we have (using \cite{EFRST20}) that $\gcd(p^k+1,p^n-1)=p^d+1$. From the results above, this bound is obtained, and we have exactly $p^d+1$ solutions for the general equation for some $a_1, a_2$. 

If $m$ is odd, for $p\geq 2$, then $m>2$ since otherwise $n=k$. Then, by \cite{Bluher04}, the amount of values of  $\beta$ such that there are $p^d+1$ solutions to the equation is nonzero. Since $\beta$ is linear on $b$, the maximum number of solutions taken over all $b$ is $p^d+1$.

One still has to study the case where $p=2$ and $m=2$. In this case, by \cite{EFRST20}, we have that $\gcd(2^k+1,2^n-1)=\frac{2^{\gcd(2k,n)}-1}{2^{\gcd(k,n)}-1}=\frac{2^{2d}-1}{2^{d}-1}=2^d+1$. So, in that case as well, by the previous results we obtain the bound $p^d+1$.
 \end{proof}

     The following corollary implies that, for any odd characteristic, we can always obtain functions whose higher differential uniformity is 2, regardless of the order of differentiation.
      
\begin{cor}
\label{Goldspecial}
 Let $F(x)=x^{p^k+1}$, and let $1\neq c\in\F_{p^{\gcd(k,n)}}$. Then, the maximum number of solutions to ${_c}D_{a_1,a_2,\ldots,a_t}^{(t)}F(x) = b$ is $\gcd(p^k+1,p^n-1)$.
 \end{cor}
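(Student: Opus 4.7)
My plan is to invoke the theorem stated immediately before Theorem~\ref{Gold2nd}, which gives a closed form for any higher-order $c$-derivative of a quadratic function. The Gold monomial $F(x)=x^{p^k+1}$ is quadratic of exactly the required form with $q=p^k$, so taking $h=k$ the hypothesis $c\in\F_{p^{\gcd(n,k)}}\setminus\{1\}$ of the corollary matches the hypothesis of that theorem, yielding
\[
{_c}D_{a_1,\ldots,a_t}^{(t)}F(x) \;=\; (1-c)^t\, F\!\left(x + \tfrac{a_1+\cdots+a_t}{1-c}\right) + \beta',
\]
with $\beta'$ depending on the $a_i$'s and on $c$ but independent of $x$. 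This collapses the entire $t$-th order $c$-difference into a single shifted evaluation of $F$.

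From here I would make the bijective substitution $y = x + (a_1+\cdots+a_t)/(1-c)$ and divide through by $(1-c)^t \neq 0$, turning the equation ${_c}D_{a_1,\ldots,a_t}^{(t)}F(x)=b$ into the monomial equation
\[
y^{p^k+1} \;=\; \frac{b-\beta'}{(1-c)^t}.
\]
Since $y\mapsto y^{p^k+1}$ is a $\gcd(p^k+1,p^n-1)$-to-$1$ map onto its image in $\F_{p^n}^*$ (with $0$ the unique preimage of $0$), the number of solutions is at most $\gcd(p^k+1,p^n-1)$.

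For attainability I would pick any nonzero $\gamma$ in the image of $y\mapsto y^{p^k+1}$ (for instance $\gamma=1$) and set $b = \beta' + (1-c)^t\gamma$; the resulting equation then has exactly $\gcd(p^k+1,p^n-1)$ solutions for every admissible choice of $a_1,\ldots,a_t$. Combined with the upper bound, this gives $_c\delta^{(t)}_F = \gcd(p^k+1,p^n-1)$. The argument is essentially mechanical once the quadratic higher-order derivative identity is in hand; the one point that needs to be checked carefully is the $x$-independence of $\beta'$, which is exactly what lets the substitution reduce the expression to a clean monomial in $y$ rather than leaving residual cross-terms in $x$.
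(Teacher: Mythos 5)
Your proposal is correct and follows the same route the paper intends: the corollary is a direct consequence of the quadratic-function theorem preceding Theorem~\ref{Gold2nd}, whose proof supplies exactly the identity ${_c}D_{a_1,\ldots,a_t}^{(t)}F(x)=(1-c)^tF\bigl(x+\tfrac{a_1+\cdots+a_t}{1-c}\bigr)+\beta'$ with $\beta'$ independent of $x$, reducing everything to counting preimages of the monomial $y\mapsto y^{p^k+1}$. The paper leaves this unwritten, but your substitution and the $\gcd(p^k+1,p^n-1)$-to-one counting, together with the attainability choice of $b$, is precisely the intended argument.
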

\begin{rem}
 It is not difficult to modify the argument to obtain the same outcome as above for the $t$-order derivative taken with respect to different $c$'s.
\end{rem}

\section{Summary and further comments}
\label{secsum}

In this paper we investigate  higher order $c$-differentials, noting that traditional derivatives are a special case of our extension (i.e. when $c=1$).  
We also look at the specific case of the inverse function over fields of even characteristic  and the Gold function over any characteristic. While many properties of higher order $c$-differentials are preserved  from the traditional higher order derivative,  a key difference arises in that the higher order $c$-derivatives do not require linearly independent input differences. Thus, the higher order $c$-derivatives we have introduced could potentially allow the use of more input pairs (for encryption or decryption), which in turn could lead to differentials with higher probabilities than traditional higher order differential attacks or the new $c$-differential attack using one derivative.

 \end{document}